\def\senbun#1(#2)#3({\@senbun(#2)(}
\def\@senbun(#1,#2)(#3,#4){%
   \@tempdima#1\p@ \advance\@tempdima#3\p@
   \divide\@tempdima\tw@
   \@tempdimb#2\p@ \advance\@tempdimb#4\p@
   \divide\@tempdimb\tw@
   \edef\@senbun@temp{\noexpand\qbezier(#1,#2)%
      (\strip@pt\@tempdima,\strip@pt\@tempdimb)(#3,#4)}%
   \@senbun@temp}
\newcommand{\CB}{}
\newcommand{\ASY}{{\sc Asynch}}
\newcommand{\FSY}{{\sc Fsynch}}
\newcommand{\SSY}{{\sc Ssynch}}
\newcommand{\RR}{\textsc{Round-Robin}\xspace}
\newcommand{\CENT}{\textsc{Cent}\xspace}
\newcommand{\Look}{\mathit{Look}\xspace}
\newcommand{\Compute}{\mathit{Compute}\xspace}
\newcommand{\Move}{\mathit{Move}\xspace}
\newcommand{\LCM}{\mathit{LCM}\xspace}
\newcommand{\LU}{{\mathcal{LUMI}}} 
\newcommand{\FS}{{\mathcal{FST\!A}}} 
\newcommand{\FC}{{\mathcal{FCOM}}} 
\newcommand{\OB}{{\mathcal{OBLOT}}} 
\newcommand{\N}{{\rm I\kern-.22em N}} 
\newcommand{\Z}{{\sf Z\kern-.42em Z}} 
\newcommand{\R}{{\rm I\kern-.22em R}}
\newcounter{Codeline}
\newcommand\mycom[2]{\genfrac{}{}{0pt}{}{#1}{#2}}
\begin{document}

\title{Gathering Semi-Synchronously Scheduled Two-State Robots\thanks{This work was supported in part by JSPS KAKENHI Grant Number~20K11685, and~21K11748.}}
\titlerunning{Optimal Gathering on $\FS$ Mobile robots}

\author{Kohei Otaka\inst{1}\orcidID{0009-0008-2155-6960}
\and
Fabian Frei\inst{2,3}\orcidID{0000-0002-1368-3205}
\and\\
Koichi Wada\inst{1}\orcidID{0000-0002-5351-1459}
}
\authorrunning{K. Otaka et al.}

\institute{Hosei University, Tokyo, Japan\\
\email{kohei.otaka.8n@stu.hosei.ac.jp, wada@hosei.ac.jp} \and
ETH Zurich, Department of Computer 
Science, Zürich, Switzerland, \and CISPA Helmholtz Center for Information 
Security, Saarbrücken, 
Germany, \email{fabian.frei@inf.ethz.ch}, \email{fabian.frei@cispa.de}\\
}%

\maketitle

\begin{abstract}
We study the problem \emph{Gathering} for $n$ autonomous mobile robots in synchronous settings
with a persistent memory called \emph{light}.
It is well known that Gathering is impossible in the basic model ($\OB$) where robots have no lights,
even if the system is semi-synchronous (called \SSY). 
Gathering becomes possible, however, if each robot has a light of some type that can be set to a constant number of colors.
In the $\FC$ model,
the robots can only see the lights of other robots.
In the $\FS$ model, 
each robot can only observe its own light. In the $\LU$ model, all robots can see all lights.
This paper focuses on $\FS$ robots with 2-colored lights in synchronous settings.
We show that 2-color $\FS$ and $\FC$ robots cannot solve Gathering in \SSY\ without additional conditions, even with rigid movement and agreement of chirality and the minimum moving distance. We also improve the condition of the previous gathering algorithm for $\FS$ robots with 2-color working in \SSY.
\end{abstract}

\section{Introduction}

\noindent\textbf{Background and Motivation.}\hspace*{1em}
The computational power of autonomously acting, simple, mobile robots has been the object of intense research in the field of distributed computing.
Ever since Suzuki and Yamashita's seminal work ~\cite{SY}, a large amount of work has been dedicated to the research of theoretical models of such autonomous mobile robots~\cite{AP,BDT,CFPS,DKLMPW,IBTW,KLOT,SDY}.
In the default setting, a robot is modeled as a point in the two-dimensional plane, and its capabilities are rather weak.
In particular, the robots are assumed to be \emph{oblivious} (have no memory to record past history), \emph{anonymous} (have no IDs), and \emph{uniform} (run identical algorithms) \cite{FPS}.  

Robots operate in synchronous $\Look$-$\Compute$-$\Move$ ($\LCM$) cycles. 
In each \emph{round}, a nonempty set of (possibly all) robots is activated, 
and all activated robots then simultaneously perform an $\LCM$ cycle. The round ends as soon as all activated robots have performed their cycle. 
Each cycle is composed of three phases:
in the $\Look$ phase, a robot obtains a snapshot of the plane showing the positions of the other robots; in the $\Compute$ phase, it executes its algorithm (which is identical for all robots) using the snapshot as input; then it moves towards
the computed destination in the $\Move$ phase.
Repetition of these cycles allows robots to collectively perform some tasks and solve some problems. 

The selection of which robots are activated in a round is made by an adversarial scheduler.
This general setting is usually called {\em semi-synchronous}  (\SSY).
The special restricted setting where every robot is activated in every round is called  {\em fully-synchronous} 
 (\FSY)~\cite{FPS}. 

These systems have been extensively investigated within distributed computing.
The focus of the research has been on understanding the nature and extent of the impact of crucial factors, such as {\em memory persistence} 
 and {\em communication capability}, have on the solvability of a problem and thus on the computational power of the system.
 To this end, four robot models with light have been identified and investigated: $\OB$, $\FS$, $\FC$, and $\LU$.
The most common (and weakest) model $\OB$~\cite{SY} (which stand for \emph{obl}ivious rob\emph{ot}s) assumes basic robots without light. 
In the strongest model $\LU$~\cite{DFPSY}(which stands for \emph{lumi}nous) robots can see their own lights as well as those of the other robots, 
whereas in $\FC$ and $\FS$~\cite{FSVY} (which stand for \emph{finite} \emph{com}munication and \emph{f}inite \emph{sta}tes, respectively), 
they can see, respectively, only the lights of the other robots (granting some communication capabilities) or only their own lights (which translates to persistent internal memory).

\noindent\textbf{Gathering and Previous Results.}\hspace*{1em}
\emph{Gathering} is one of the most fundamental tasks for autonomous mobile robots.
Gathering is the process where $n$ mobile robots, initially located in arbitrary positions, meet within finite time at an arbitrary single location.
When there are two robots (that is, for $n=2$), the task of Gathering is usually called \emph{Rendezvous}. 
Since 
Gathering is a simple but essential problem, it has been intensively studied  
and a number of possibility and impossibility results have been shown under different assumptions~\cite{AP,AOSY,BDT,CDN,CFPS,CP,DFPSY,DePP20,FPSW05,FSW19,IKIW,ISKIDWY,LMA,OWD,SDY,SY}.

\begin{table}[t]
\centering
\caption{Previous Gathering algorithms for robots with lights.}
\label{tab:Table-Gathering}
{\footnotesize
\begin{tabular}{|c|c|c|c|c|c|}
\hline
Scheduler      & Movement  & $\LU$ & $\FC$ & $\FS$ & $\OB$ \\ \hline\hline
\FSY & Non-Rigid & $\rightarrow$ & $\rightarrow$ & $\rightarrow$ & $\bigcirc$  \\ \hline
\RR & Rigid & $\downarrow$ &  $\downarrow$ & 2~\cite{TWK} & $\times$~\cite{DePP20} \\ \hline
\CENT & Non-Rigid & $\downarrow$ & 2~\cite{TWK} & $\downarrow$ & $\times^*$\cite{DePP20} \\ \hline
\multirow{3}{*}{\SSY} & Rigid & $\downarrow$ & 3,2$^{**}$~\cite{TWK} & ? & $\times$\cite{FPS}\\ \cline{2-6}
               & Non-Rigid(+$\delta$=) & $\downarrow$ & ? & $2^{***}$~\cite{TWK} & $\uparrow$  \\ \cline{2-6}
               & Non-Rigid & 2~\cite{TWK} & ?& ? & $\uparrow$  \\ \hline
\multirow{2}{*}{\ASY} & Rigid & $\downarrow$ & ? & $\downarrow$ & \multirow{2}{*}{$\uparrow$}  \\ \cline{2-5} 
               & Non-Rigid & 3\cite{NSW-2021} & ? & $\infty^{****}$~\cite{C04} &     \\ \hline
\end{tabular}

\smallskip

\noindent
\CB{
The symbols mean the following.\\ 
$^*$: Distinct gathering.  
$^{**}$: Local-awareness.
$^{***}$: $2\delta$-distant.
$^{****}$: unlimited number of colors.\\
$\bigcirc$: solvable.
$\times$: unsolvable.
?: unknown.
$\rightarrow, \downarrow, \uparrow$: The same as what the arrow is pointing to. (Here, the possibility and impossibility are derived from the stronger and weaker model, respectively.)
}
}
\end{table}

Table~\ref{tab:Table-Gathering} summarizes the previous results for Gathering by robots with lights. For all of them, it is assumed that different robots may have different local coordinate systems with different length units (i.e., no consistency between robots is guaranteed), but each local coordinate system remains the same throughout all rounds (i.e., the robots are all self-consistent); this is referred to as {\em fixed disorientation}. 
Moreover, no capabilities for detecting multiplicity are assumed.

In the basic $\OB$ model, Gathering is trivially solvable in \FSY\ but remains unsolvable in \SSY, even with assumptions such as rigidity of movement (i.e., robots always reach their target) or consistent chirality~\cite{FPS}. However, for robots equipped with lights, the problem becomes solvable in \SSY\ for various models like $\LU$, $\FC$, and $\FS$. Solvability in these settings results from different combinations of factors such as the number of available colors, algorithmic constraints, and movement restrictions. These differences highlight the complexity and versatility of solutions in the presence of different additional capabilities. Table~\ref{tab:Table-Gathering} also includes results for the asynchronous scheduler (\ASY) for comparison purposes. In \ASY, there is no common notion of time; the robots may be activated independently of the others, letting them perform their $\Look$, $\Compute$, and $\Move$ operations at arbitrary times~\cite{FPSW08}.

Regarding movement restriction, 
\emph{Rigid} means that robots always reach the computed destination during the movement operation.
\emph{Non-Rigid} means that a robot $x$ may be stopped before reaching the calculated destination but is be stopped before having moved some distance $\delta_x>0$ unknown to the robots, guaranteeing that any destination located within a radius of $\delta_x$ can be reached.
\emph{Non-Rigid(+$\delta$=)} is the same as  Non-Rigid, except that now $\delta_x$ is the same for all robots and known to them. 
In the following, we assume Non-Rigid(+$\delta=$).

We introduce some possibility and impossibility results for Gathering. 
It is known~\cite{FPS,DePP20} that Gathering for $\OB$ is not solvable in \SSY.
In particular, 
Gathering for $\OB$ is deterministically unsolvable in a restricted subclass of \SSY, where exactly one robot is activated in each round and they are always activated in the same order (called \RR).
Moreover, if all robots are initially located in different positions (called Distinct Gathering), it is deterministically unsolvable under
a $2$-bounded \CENT scheduler, where a scheduler is $2$-bounded if between any two consecutive activations of any robot, any other robot is activated at most $2$ times and \CENT means that exactly one robot is activated in each round~\cite{DePP20}. 
This impossibility holds even if we assume chirality and rigidity. 

Multiplicity detection is a strong assumption when it comes to solving Gathering. We know~\cite{FPS} that, if strong multiplicity is assumed, Gathering for $n$ robots is solvable in \SSY\ if and only if $n$ is odd, and that distinct Gathering for $n$ robots is solvable for $n\geq 3$ even in \ASY.
Non-oblivious robots have persistent memory. This is also true for robots with internal lights, but we restrict the amount of memory with internal lights is restricted to a constant. Gathering was already shown to be solvable with non-oblivious robots~\cite{C04}. 
However, the known algorithm stores the locations of other robots exactly, and the amount of memory exceeds any constant. It has remained unknown whether Gathering is solvable by robots with internal lights with a constant number of colors.

In the $\LU$ model, there is a 2-color algorithm with chirality and non-rigid movement in \SSY~\cite{TWK} and 3-color one with the same assumption in \ASY~\cite{NSW-2021}.
In the $\FC$ model, there is an algorithm with $3$ colors assuming rigidity. 
Assuming local awareness (i.e., robots recognize other robots sharing the same location) reduces the number of used colors down to $2$~\cite{TWK}.
In the $\FS$ model, there is an algorithm assuming Non-Rigid(+$\delta$=) with lights of only $2$ colors if 
the initial configuration of robots is $2\delta$-distant, where $2\delta$-distant means that the largest distance between two robots in the configuration is at least $2\delta$.
There also exist 2-color Gathering algorithms in \CENT and \RR schedulers 
for $\FC$ and $\FS$ robots, respectively~\cite{TWK}.

\noindent\textbf{Our Contributions.}\hspace*{1em} 
We prove the impossibility of Gathering on $\FC$ or $\FS$ for robots with 2 colors.
Specifically, we show that $\FC$ and $\FS$ robots with two-colored lights cannot solve Gathering in \SSY, even under the assumptions of rigid movement, consistent chirality, and a shared unit of length.
This result demonstrates that the assumptions of previous Gathering algorithms for $\FC$ and $\FS$ robots shown in Table~\ref{tab:Table-Gathering} are optimal in the sense that Gathering becomes impossible without assuming them.
For example, for the 2-color algorithm for $\FC$ robots working with rigid movement, the condition of {\em local-awareness} 
cannot be removed, and for the 2-color algorithm for $\FS$ robots working in Non-Rigid(+$\delta$=), the condition of {\em $2\delta$-distant} cannot be removed either. 
The conditions that cannot be removed are minimal and should be as weak as possible. In this paper, we demonstrate that a Gathering algorithm for $\FS$ robots with two colors exists under conditions weaker than a $2\delta$-distant initial configuration. Specifically, we show that if only two forbidden patterns are excluded in the initial configuration, a Gathering algorithm for $\FS$ robots with two colors can be achieved.

\noindent\textbf{Roadmap.}\hspace*{1em} 
The remainder of the paper is organized as follows. In Section~\ref{sec:model}, 
we define our robot model, the gathering problem, and the terminologies. 
Section~\ref{sec:Impossibility} presents the impossibility of Gathering for $\FC$ and $\FS$ robots with 2-color lights. 
Section~\ref{sec:algorithm} presents an optimal Gathering algorithm for $\FS$ robots with 2-color lights in \SSY.  Section~\ref{sec:conclusion} concludes the paper with a short summary.

\section{Preliminaries}\label{sec:model}
We consider a set of anonymous mobile robots $\mathcal{R}=\{r_{1},...,r_{n}\}$ located in $\mathbb{R}^{2}$. Each robot $r_{i}$ has a persistent state $l_{i}$ called its light, which may be taken from a finite set $L$ of colors. We denote by $l_{i}(t)\in L$ the color that the light of robot $r_i$ has at time $t$ and by $p_{i}(t)\in\mathbb{R}^{2}$ the position occupied by $r_i$ at time $t$ represented in some global coordinate system. A configuration $C(t)$ at time $t$ is a multiset of $n$ pairs $(l_{i}(t),p_{i}(t))$, each defining the color of light and the position of the robot $r_i$ at time $t$. When no confusion arises, $C(t)$ is simply denoted by $C$.

For a subset $S$ of $L \times \mathbb{R}^{2}$, $\mathcal{L}(S)$ and $\mathcal{P}(S)$ denote the projections to $L$ and $\mathbb{R}^{2}$ from $S$, respectively.

Each robot $r_i$ has its own coordinate system where $r_i$ is located at its origin at any time. These coordinate systems do not necessarily agree with those of other robots. This means that there is no guarantee of a common unit distance, nor for the directions of coordinate axes, nor for a clockwise orientation (chirality). However, each local coordinate system remains the same throughout all rounds. This is called the {\em fixed disorientation}. 

At any time, any robot can be active or inactive. When a robot $r_{i}$ is activated, it executes the $\Look$, $\Compute$, and $\Move$ cycles:

\begin{itemize}
    \item \textbf{$\Look$:} The robot $r_i$ activates its sensors to obtain a snapshot which consists of a pair of light and position for every robot with respect to the coordinate system of $r_i$. 
    Let $\mathcal{SS}_{i}(t)$ denote the snapshot of $r_i$ at time $t$. We assume that robots can observe all other robots (unlimited visibility). Note that $\mathcal{SS}_{i}(t)$ represents a sub-multi-set of $C(t)$ according to imposed assumptions in the local coordinate system of $r_i$, where $r_i$ is at the origin.
    \item \textbf{$\Compute$:} The robot $r_i$ executes its algorithm using the snapshot and (if visible) the color of its own light and returns a destination point $des_i$ expressed in its own coordinate system and a light $l_{i} \in L$. The robot $r_i$ sets its own light to the color $l_{i}$.
    \item \textbf{$\Move$:} The robot $r_i$ moves to the computed destination $des_i$. If the robot may be stopped by an adversary before reaching the computed destination, the movement is said to be {\em non-rigid}. Otherwise, it is said to be {\em rigid}. If stopped before reaching its destination, we assume that a robot has moved at least a minimum distance $\delta>0$. Note that without this assumption an adversary could make it impossible for any robot to ever reach its destination. If the distance to the destination is at most $\delta$, the robot can thus reach it. If the movement is non-rigid and robots know the value of $\delta$, this is called Non-Rigid($+\delta=$).
\end{itemize}

In the $\Look$ operation, the snapshot $\mathcal{SS}_{i}$ of $r_i$ should contain the positions of all robots, including $r_i$. However, if robots located on $p_i$ and $r_i$ can recognize the other robots, the robots have multiplicity detection at this point. Thus, we separately classify the observation of other robots located on $p_i$ for robot $r_i$. If any robot $r_i$ can observe the other robots located on $p_i$, it is said to be \emph{local-aware}. Otherwise, it is said to be local-unaware. Note that if we assume local awareness, $r_i$ recognizes whether other robots occupy location $p_i$ or not. In the following, we usually use the assumption that the system is local-aware.

A scheduler decides which subset of robots is activated for every configuration. The scheduler we consider is semi-synchronous. Moreover, it is always assumed that schedulers are fair, that is, each robot is activated infinitely often.

\begin{itemize}
    \item \textbf{\SSY:} The semi-synchronous scheduler (\SSY) activates a subset of all robots synchronously and their $\Look$-$\Compute$-$\Move$ cycles are performed at the same time. We can assume that activated robots at the same time obtain the same snapshot (adjusted to their local coordinate system) and their $\Compute$ and $\Move$ are executed instantaneously. In \SSY, we can assume that any activation happens in a discrete-time round and the $\Look$-$\Compute$-$\Move$ cycle is performed instantaneously in each round. In the following, since we consider \SSY\ and its subsets, we use round and time interchangeably.
\end{itemize}

As a special case of \SSY, if all robots are activated in each round, the scheduler is called fully-synchronous (\FSY). 

Let $C(t)$ be a configuration in round $t$. When $C(t)$ reaches $C(t+1)$ by executing the cycle at $t$, this is denoted as $C(t)\rightarrow C(t+1)$, where $C(t+1)$ is obtained by activating the robots once at time $t$ to execute the algorithm on $C(t)$. The reflective and transitive closure of $\rightarrow$ is denoted as $\rightarrow^{\ast}$. That is, a configuration transition $C(t) \rightarrow C(t^{\prime}) \rightarrow \cdots \rightarrow C(t^{\prime\prime})$ is denoted by $C(t) \rightarrow^{\ast} C(t^{\prime\prime})$.

Snapshots may be different by using assumptions even if these configurations are the same, and they depend on the multiplicity detection and on how robots can see lights of other robots when robots are equipped with lights. Robots are said to be capable of (weak) multiplicity detection if they can distinguish whether a point is occupied by at least two robots. The multiplicity detection is strong if the robots can detect the exact number of robots at any given point.

In our settings, robots have persistent lights and can change their color after $\Compute$ operation. With regard to the visibility of the lights, we consider the following robot model.

\begin{itemize}
    \item \textbf{$\LU$:} The robot can recognize not only colors of lights of other robots but also its own color of light.
    \item \textbf{$\FC$:} The robot can recognize only colors of lights of other robots but cannot see its own color of light. Note that a robot can still set its own color in each round.
    \item \textbf{$\FS$:} The robot can recognize only the color of its own light but not the lights of other robots.
\end{itemize}

When a robot performs the $\Look$ operation in $\FS$, its snapshot is the same as in the case of robots without lights.

Given a snapshot $\mathcal{SS}_{i}$ of a robot $r_i$ and a point $p_{j}(j\neq i)$ included in $\mathcal{P}(\mathcal{SS}_{i})$, a view $V_{i}[p_j]$ of $p_j$ in $\mathcal{SS}_{i}$ is a subset of $AL_{i}[p_j]=\{l | (l,p_j) \in \mathcal{SS}_{i},r_j \neq r_i\}$, where $AL_{i}[p_j]$ is a multi-set of colors of other robots that $r_i$ can see at point $p_j$. For any robot $r_i$ and any point $p$ in the snapshot of $r_i$, if $V_{i}[p]=AL_{i}[p]$, the view of the robots is called the \emph{multiset view}. If $V_{i}[p]$ regards $AL_{i}[p]$ as just a set, it is called \emph{set view}. If $V_{i}[p]$ is a set of any single element taken from $AL_{i}[p]$, it is called \emph{arbitrary view}. Let $V_i$ denote $\bigcup_{(l,p)\in \mathcal{SS}_{i}}V_{i}[p]$.

Multiset view is a strong assumption because robots without lights (i.e., with one color) can have strong multiplicity detection if multiset view is assumed. In fact, we can solve the Gathering problem by using robots without lights and multiset view \cite{flocchini2022distributed}. On the other hand, set view and arbitrary view do not imply multiplicity detection. In the following, we assume set view.

The $n$-Gathering task is defined as follows: given $n(\geq 2)$ robots initially placed at arbitrary positions in $\mathbb{R}^{2}$, let them congregate in finite time at a single location which is not predefined. In the following, the case $2$-Gathering problem is called $Rendezvous$ and the $n$-Gathering problem for $n \geq 3$ is simply called Gathering. Gathering is said to be distinct if all robots are initially placed in different positions. An algorithm solving Gathering is said to be \emph{self-stabilizing} if the robots initially have their lights set to arbitrary colors and start their execution from the $\Look$ operation.

Given two points $p,q \in \mathbb{R}^{2}$, we indicate the line segment by $\overline{pq}$ and its length by $|\overline{pq}|$. Let $\mathcal{SS}$ be a configuration or a snapshot. Given $\mathcal{SS}$, $SEC(\mathcal{SS})$ denotes the smallest enclosing circle containing $\mathcal{P(SS)}$, and the length of its diameter and center are denoted by $Diam(\mathcal{SS})$ and $CTR(\mathcal{SS})$, respectively. A longest distance segment ($LDS$, for short) in $\mathcal{SS}$ is a line segment $\overline{pq}$ such that $p,q \in \mathcal{P(SS)}$ and $|\overline{pq}| = max_{x,y\in \mathcal{P(SS)}}|\overline{xy}|$ and the set of the longest distance segments in $\mathcal{SS}$ is denoted by $LDS(\mathcal{SS})$. If $|LDS(SS)|=1$, $LDS$ in $\mathcal{SS}$ is denoted by $\overline{pq}_{\mathcal{SS}}$ and $O(LDS(\mathcal{SS}))$ denotes the set of points that are not within $\overline{pq}_{\mathcal{SS}}$. If $|O(LDS(\mathcal{SS}))|=0$, $\mathcal{SS}$ is called $OnLDS$.

Formally we define {\em color configurations}  as follows. Let $C(t)$ be a
configuration at time $t$, and let $p$ and $q$ be the endpoints of $\overline{pq}_{C(t)}$. The configuration $C(t)$ has a color configuration.
\begin{enumerate}
    \item $\alpha\beta$, if all robots at $p$ have color $\alpha$, all robots at $q$ have color $\beta$ ($\alpha,\beta \in\{A,B\}$) and there are no robots inside $\overline{pq}_{C(t)}$.
    \item $\alpha\gamma\beta$, if all robots at $p$ have color $\alpha$, all robots at $q$ have color $\beta$, all robots at the mid-point of the $\overline{pq}_{C(t)}$ have color $\gamma$ ($\alpha,\beta\,\gamma in\{A,B\}$) and no robots are in other locations.
    \item $\alpha{\mycom{\gamma}{\zeta}}\beta$, if all robots at $p$ have color $\alpha$, all robots at $q$ have color $\beta$, all robots at the mid-point of the $\overline{pq}_{C(t)}$ have color $\gamma$ or $\zeta$ ($\alpha,\beta,\gamma,\zeta \in\{A,B\}$) and no robots are in other locations.
\end{enumerate}

\section{Minimum Number of Lights for Gathering}\label{sec:Impossibility}

The following Theorem~\ref{thm:two-light-insufficient} provides the 
complementing lower bound to the known upper bounds for 
both the case of $\FC$ and $\FS$ robots by showing that two colors of 
lights do not suffice. 
Note that this contrasts with the full-light model, where two 
lights are sufficient for gathering in 
\SSY~\cite{TWK}.
Moreover, note that the restriction to $\FS$ in the last part of the theorem is inevitable because Rendezvous is possible with 3 colors even in \ASY\ with non-rigid movement~\cite[Thm.~4]{V}.

\begin{theorem}[3 colors of $\FC$ and $\FS$ robots are 
necessary]\label{thm:two-light-insufficient}
Consider the $\FC$ or $\FS$ model working in \SSY\ with rigid 
movement, consistent chirality, and a shared unit. 
With only two colors of lights, Gathering is impossible for any $n\ge2$. 
Moreover, for the $\FS$ model, it is impossible even with an unlimited number of 
colors if the algorithm is assumed to be self-stabilizing. 
\end{theorem}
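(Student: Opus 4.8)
The plan is to construct, for any candidate algorithm, a \emph{fair} \SSY-schedule together with a symmetric placement of the robots under which they never meet. First I would reduce the general statement to Rendezvous ($n=2$): place the robots in two antipodal clusters (of equal or near-equal size; for $n=3$ one uses instead an equilateral triangle), give all robots of a cluster a common local frame and the opposite cluster the $180^{\circ}$-rotated one, and let the scheduler activate only whole clusters. Then the members of a cluster always share color, snapshot, and frame, hence move identically, so the clusters evolve exactly like two coupled robots; it therefore suffices to defeat Rendezvous, where each round the scheduler may activate $r_1$, $r_2$, or both. I will treat $n=2$, which carries all the difficulty.

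The heart of the argument is a symmetry lemma: in a two-robot configuration with colors $c_1,c_2$ at distance $d>0$, a robot's computed move depends only on its own color and the partner's position in its (fixed) frame in the $\FS$ model (on the partner's color instead of its own in $\FC$); and if $c_1=c_2$ and both robots are activated, choosing the two frames to be $180^{\circ}$-rotations of each other forces point-symmetric moves, so the configuration stays point-symmetric with a common new color, and such a step gathers the robots only if the common destination is the midpoint. The schedule then starts symmetric with both robots in the algorithm's designated initial color and maintains the invariant that the \emph{color pattern} stays in a fixed bad subset of $\{A,B\}^2$ with $d>0$. Dodging a meeting for one round is always possible, since the only coincidence events are ``$r_1$ walks onto $r_2$'', ``$r_2$ walks onto $r_1$'', and ``both collapse to one point'', and the first two cannot hold simultaneously while also collapsing (they produce a swap).

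The main obstacle is doing this while staying \emph{fair}: a schedule that dodges meetings greedily can starve a robot, and Table~\ref{tab:Table-Gathering} shows that two-color $\FS$ robots do gather under the fair \RR schedule, so round-robin-type schedules are useless here. The plan is to work in phases, in each phase activating one robot repeatedly and following its dynamics — a deterministic system with only two discrete states — until it reaches a color/distance pair from which activating the \emph{other} robot cannot cause a collapse, re-establishing a symmetric (or designated bad asymmetric) pattern at a smaller scale; the crux is that in $\FS$ a robot activated alone is blind to the partner's color, so the algorithm cannot make its ``walk-onto'' decision depend on it — exactly the leverage that $\LU$ has and that makes two colors suffice there. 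Interleaving such phases and alternating which robot is active keeps the pattern bad forever along a fair schedule, contradicting correctness; $\FC$ is handled symmetrically, with ``own color'' and ``partner's color'' exchanged. This exhaustive case analysis is where I expect the bulk of the work to lie.

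For the self-stabilizing, unbounded-color statement for $\FS$, the schedule additionally chooses the initial colors, so I would start point-symmetric with both robots in an \emph{arbitrary} common color and run ``activate both'' forever: being fair, it must gather, hence reach a first symmetric configuration from which both robots move to the midpoint. Exploiting self-stabilization, the half-collapsed configuration obtained by activating only one of the two there is itself a legal \emph{initial} configuration, from which the algorithm must again gather; iterating this — each time the fully-synchronous continuation reaches such a collapse step, instead deviate by activating the single robot whose move does not gather, then resume — yields an infinite non-gathering execution, where the halved distances make all produced configurations distinct (precluding cycles) and fairness is restored, as in the bounded case, by spacing the single-robot deviations with genuine symmetric rounds; again this uses that $\FS$ robots cannot read the partner's color, so the unbounded color space buys the algorithm nothing against this schedule. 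The restriction to $\FS$ is necessary and expected: with $\FC$ or $\LU$ a robot can read its partner's color, and indeed three colors already suffice for Rendezvous even in \ASY~\cite[Thm.~4]{V}, as noted above.
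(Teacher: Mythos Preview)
Your reduction to $n=2$ via antipodal clusters and $180^\circ$-rotated frames matches the paper's setup (the equilateral-triangle aside for $n=3$ is unnecessary: two clusters of unequal size work just as well, since without multiplicity detection both sides see the same two-point snapshot).

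For the two-color case your plan is in the right spirit but remains a sketch where the paper gives a sharper argument. Rather than vague ``phases'' with a promised exhaustive case analysis, the paper defaults to activating both robots whenever that does not gather, and isolates the one delicate situation: a round where activating both \emph{would} gather yet only one robot moves. Then the two robots must be seeing different internal colors; the paper pumps the non-moving robot until it either moves (resume the default strategy) or, with only two colors, loops without moving. In the latter case one color means ``stay'' and the other ``walk to partner,'' and a short four-subcase check shows that from the all-$A$ initial configuration the \FSY\ schedule produces perpetual swapping and/or staying. You should aim for this structure rather than the phase idea, which runs uncomfortably close to \RR\ (where, as you note yourself, two-color $\FS$ robots \emph{do} gather).

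The self-stabilizing argument has a genuine gap. Your scheme is: run \FSY\ until the step that would gather, deviate by activating only the robot whose solo move does not gather, then treat the result as a fresh legal start and repeat. But after the first deviation the configuration is generally \emph{asymmetric} in color, and nothing prevents the following scenario: from now on, every \FSY\ step gathers by having $r_1$ walk onto a stationary $r_2$, so the only admissible deviation is ``activate $r_2$ alone,'' which leaves positions unchanged. If $r_2$ then cycles through a finite set of colors without ever moving, your schedule activates $r_2$ forever and starves $r_1$ --- it is not fair. Your proposed fix of ``spacing deviations with genuine symmetric rounds'' does not apply, because there are no non-gathering symmetric rounds available. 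The paper's key idea, which you are missing, is to turn this obstruction into the contradiction: once some robot is found to cycle through colors $c_1\to c_2\to\cdots\to c_1$ while staying put, self-stabilization lets you \emph{restart both robots} with color $c_1$ in a symmetric configuration; under \FSY\ both then loop through the same non-moving cycle forever, which is fair and never gathers.
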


\begin{proof}
The case of $\FS$ robots and the case of $\FC$ robots can be proved similarly.
For the sake of simplicity, we assume $n=2$. The proof is easily 
generalized to the case $n>2$ by considering 
a 2-point configuration in which the robots of each point are 
always activated together. 

We show that no algorithm can achieve a gathering for every 
\SSY\
schedule. 
Fix an algorithm for the two robots. We assume that they have the 
same snapshot except that one is rotated by 180 degrees with respect to 
the other. We construct a schedule for 
which the robots with this algorithm cannot gather, round by round. 
For any round that starts with a configuration where the robots do 
not gather if they are both activated, we activate them both. 
If activating them both leads to a gathering, we distinguish two 
cases. If both robots move during their cycle, then we change the 
schedule such that only one robot is 
activated to prevent the gathering. 
The robot to be activated can be chosen arbitrarily; we can 
therefore continue to use this strategy without violating fairness 
to prevent a gathering as long as this case occurs.

The remaining case is a round that achieves gathering with only 
one of the two robots moving, even if both are activated. 
In this case, the two robots behave differently---one is standing still while the other is moving towards it---and must thus
see different internal lights. 
We first keep activating the non-moving robot, which might 
change its color, until it either decides to move or has cycled 
back to a previous color without any movement. (There is no third 
possibility since the number of colors is assumed to be finite. 
Indeed, gathering is always possible with unlimited internal 
memory.~\cite{C04}.) In the 
former case of movement, we continue the strategy as described 
above to prolong the schedule that prevents the gathering. 
The latter case remains, where one of the robots cycles through 
a set of colors without ever moving.

Assume for this paragraph the case of $\FS$ and a self-stabilizing algorithm, that is, that we can impose an initial configuration that is 
identical except for both robots seeing the same color as the internal light of the 
non-moving robot, then we can activate one of the robots arbitrarily often without any movement, until it loops back to a previously used color. We can do the same with other roobt.  
Thus both robots are in a non-moving loop and they never meet. 
This shows that gathering is impossible for $\FS$ robots if we require the 
algorithm to work for any initial configuration where all lights 
are 
set to the same color, possibly different from $A$. 
Thus Gathering is impossible for $\FS$ robots even with an unlimited number of 
colors if the algorithm is assumed to be self-stabilizing. 

We now drop the last paragraph's restriction to $\FS$ and the requirement of self-stabilization, allowing instead only initial configurations 
in which the robots are all set to light $A$. 
In exchange for weakening the adversarial scheduler, we also weaken 
the robots by granting them only the two colors $A$ and $B$. 
In the final round as constructed before, the moving robot 
sees one of the two colors and the non-moving robot the other color. 
In this case they cannot achieve a gathering from the 
initial configuration that is identical except for both robots 
having and thus seeing light $A$: If we keep activating both 
of them forever, then there are four subcases. 
The first one is that they both see the color that lets them move and that they keep the color while doing so. In this case, they will swap their positions forever.  
The second subcase is that they both see the color that lets them stay where they are and that they keep the color when doing this. In this case, they will stay where they are forever. 
The third subcase is that they both see the color that lets them stay, but they change the color while doing so. After one such swap we are in the second or the fourth subcase. 
The fourth and last subcase is that they both see the color that lets them stay where they are, but they change the color when doing so. 
This leads us back to the first or third subcase. 
The robots will thus keep synchronously swapping positions or staying where they are  forever, preventing a gathering. \qed
\end{proof}

Using this theorem, it is shown that all the conditions of previous Gathering algorithms for $\FC$ and $\FS$ robots shown in Table~\ref{tab:Table-Gathering} are necessary.

\begin{theorem}\label{th:app-th1}
\begin{enumerate}
\item[(1)] For the 3-color algorithm~\cite[Algorithm~3]{TWK} for $\FC$ robots with rigid movement in \SSY, the number of used colors is optimal. 
\item[(2)] For the 2-color algorithm~\cite[Algorithm~4]{TWK} for $\FC$ robots with rigid movement in \SSY, the condition of local-awareness cannot be removed.
\item[(3)] For the 2-color algorithm~\cite[Algorithm~5]{TWK} for $\FS$ robots with non-rigid movement in \SSY, the condition of $2\delta$-distant cannot be removed completely.
\end{enumerate}    
\end{theorem}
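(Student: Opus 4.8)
The plan is to obtain all three statements as corollaries of Theorem~\ref{thm:two-light-insufficient}; the only real work is to check that, in each case, the regime under which the cited algorithm operates falls within the regime for which the impossibility is proved. Part~(1) is immediate: Theorem~\ref{thm:two-light-insufficient} already rules out any two-color $\FC$ algorithm for Gathering in \SSY\ under rigid movement, even when consistent chirality and a shared unit are granted. Since \cite[Algorithm~3]{TWK} solves Gathering in exactly this setting with three colors, its color count cannot be lowered, hence is optimal.

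For part~(2) I would argue by contradiction. Suppose the local-awareness assumption could be removed from \cite[Algorithm~4]{TWK}, i.e., that some two-color $\FC$ algorithm solves Gathering in \SSY\ under rigid movement even when the robots are local-unaware. Such an algorithm never consults the additional information that local-awareness would supply, so its executions coincide with executions in the (default) local-aware model; this contradicts Theorem~\ref{thm:two-light-insufficient}, whose proof is carried out in the local-aware model. To make the reduction airtight I would note explicitly that the $n=2$ construction in that proof — and, for $n>2$, the paired-group variant in which the robots of each point are always activated together — is insensitive to whether co-located robots can see one another, so the impossibility indeed holds in the local-aware model.

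For part~(3) the one genuinely new ingredient is that rigid movement is a special case of Non-Rigid($+\delta=$): against any schedule, a non-rigid adversary may always let each robot reach its computed destination, reproducing a rigid execution step by step. Hence a two-color $\FS$ algorithm that gathered in \SSY\ under Non-Rigid($+\delta=$) for \emph{every} initial configuration would in particular gather under rigid movement for every initial configuration, contradicting Theorem~\ref{thm:two-light-insufficient}, which supplies, for any such algorithm, a bad two-point initial configuration together with a schedule preventing gathering. So the $2\delta$-distant precondition of \cite[Algorithm~5]{TWK} cannot be dropped outright. The statement reads "cannot be removed \emph{completely}" rather than "cannot be removed" because Section~\ref{sec:algorithm} will exhibit a two-color $\FS$ algorithm requiring only that the two forbidden patterns be absent, a strictly weaker precondition; the impossibility therefore forbids eliminating the precondition but not weakening it.

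I expect the difficulty to be bookkeeping rather than mathematics: each of the three reductions must be kept inside the exact regime of Theorem~\ref{thm:two-light-insufficient} (the model $\FC$ versus $\FS$, the two-color bound, local-awareness as the default, rigid movement as a restriction of Non-Rigid($+\delta=$), and chirality together with a shared unit granted freely to the side the adversary is working against), and part~(3) must be phrased so that its conclusion faithfully captures the gap between that impossibility and the positive result obtained in Section~\ref{sec:algorithm}.
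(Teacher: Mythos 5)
Parts (1) and (3) of your proposal are sound and follow the same route as the paper, which states Theorem~\ref{th:app-th1} as a direct consequence of Theorem~\ref{thm:two-light-insufficient}; in particular, your observation in (3) that a rigid execution is an admissible Non-Rigid($+\delta=$) execution is exactly the bridge needed there. The genuine gap is in part (2). You assert that the impossibility of Theorem~\ref{thm:two-light-insufficient} ``holds in the local-aware model'' because the paired-group construction for $n>2$ is ``insensitive to whether co-located robots can see one another.'' For $\FC$ robots this is false, and it cannot be true: a $2$-color, local-aware, rigid, \SSY\ Gathering algorithm for $\FC$ robots exists, namely \cite[Algorithm~4]{TWK} --- the very algorithm part (2) is about. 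Concretely, in the $2$-point paired-group configuration all robots of a group carry the same color at every round, so with local-awareness an $\FC$ robot sees that common color at its own position and thereby effectively learns its own light; each group then behaves like a $\LU$ robot, for which $2$-color gathering in \SSY\ is possible, and the final case analysis in the proof of Theorem~\ref{thm:two-light-insufficient} (which exploits precisely that an $\FC$ robot cannot tell its own color) collapses.

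The repair is to run the reduction in the opposite direction, which is what the paper intends: the extension of Theorem~\ref{thm:two-light-insufficient} to $n>2$ is valid for local-\emph{unaware} $\FC$ robots, because then the robots of a co-located group see nothing of one another, obtain identical snapshots, keep identical colors when activated together, and the group is indistinguishable from a single robot of the $n=2$ argument. Hence no $2$-color $\FC$ algorithm can gather local-unaware robots in \SSY\ with rigid movement, which is literally the statement that local-awareness cannot be removed from \cite[Algorithm~4]{TWK}. Your detour --- embedding a hypothetical local-unaware algorithm into the local-aware model and invoking a local-aware impossibility --- is unnecessary and rests on a premise that is false for $\FC$ with $n\ge 3$.
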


In the next section, we weaken the condition of $2\delta$-distant in Theorem~\ref{th:app-th1}~(3) instead of removing it completely.

\section{2-Color Gathering Algorithm for $\FS$ Robots}\label{sec:algorithm}
In this section, we show a Gathering algorithm for $\FS$ robots with 2 colors in Non-rigid$(+\delta=)$ with agreements of chirality if we exclude 2 patterns stated below from the initial configurations.

The views of the robots in $\FS$ are the same as those of the robots in $\OB$, so the robots must determine their behavior using these views without colors and their own colors of lights. Thus Gathering algorithms in $\FS$ cannot seem to be constructed without additional knowledge such as distance information. In fact, known Rendezvous algorithms use the minimum distance of moving $\delta$ and/or the unit distance \cite{flocchini2016rendezvous}.

In our Gathering algorithm for $\FS$ robots, we assume the agreement of $d(<\frac{\delta}{4})$ and $\epsilon(<d)$. The prohibited initial configurations are as follows: (1) there are two points $a$ and $b$ such that $d-\frac{\epsilon}{2} \leq |\overline{ab}| < d$, or (2) there are three points $a$, $b$ and $c$ such that $2d-\epsilon \leq |\overline{ac}| < 2d$, and $|\overline{ab}|=|\overline{bc}|$ (Fig.~\ref{fig:NGI}).  These patterns appear in the final phase of the Gathering algorithm and are used to achieve the Gathering using color.
However, when these patterns appear as the initial configuration, the $\FS$ robots cannot distinguish them from the patterns of this final phase.
If initial configurations do not include the two prohibited patterns, we can construct a Gathering algorithm for $\FS$ robots in Non-rigid$(+\delta=)$ and \SSY\ with 2 colors of light.

\begin{figure}[H]
        \centering
        \includegraphics[scale=0.7]{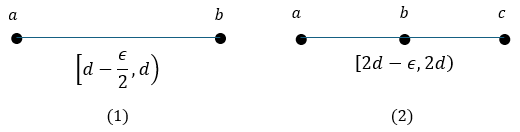}
        \caption{The prohibited initial configurations}
        \label{fig:NGI}
\end{figure}

Our Gathering algorithm for $\FS$ robots (Algorithm \ref{alg1}) consists of three parts.
\begin{enumerate}
    \item From any configuration, we make an $OnLDS$ configuration $C$ such that $|\overline{pq}_C| \geq 2d$ (Algorithm \ref{alg2}).
    \item From any $OnLDS$ configuration $C$ such that $|\overline{pq}_C| \geq 2d$, we make a 2-point configuration $C^{\prime}$ such that 
    $2d-\epsilon < |\overline{pq}_{C^{\prime}}| < 2d$ (Algorithm \ref{alg3}).
    \item From any 2-point configuration $C$ such that $2d-\epsilon \leq |\overline{pq}_C| < 2d$, we make a Gathering configuration (Algorithm \ref{alg4}).
\end{enumerate}

\begin{algorithm}[H]
    \caption{Gathering-$\FS$-Robots($r_i$)}
    \label{alg1}
    \begin{algorithmic}[1]
    \REQUIRE Any configuration except the two prohibited patterns, all robots have color $A$.
    \ENSURE Gathering configuration.
    \vskip\baselineskip
    \IF{$\lnot OnLDS$ \OR ($OnLDS$ \AND $2d > |\overline{pq}_{\mathcal{SS}_i}|$)} 
    \STATE ElectLDS$(r_i)$
    \ELSIF{$OnLDS$ \AND $2d \leq |\overline{pq}_{\mathcal{SS}_i}|$} 
    \STATE Adjustment-LDS$(r_i)$
    \ELSIF{$|\mathcal{P}(\mathcal{SS}_i)|=2$ \AND $2d-\epsilon \leq |\overline{pq}_{\mathcal{SS}_i}| < 2d$}
    \STATE Gather$(r_i)$ 
    \ENDIF
    \end{algorithmic}
\end{algorithm}

Note that we do not use colors to solve Cases 1 and 2 and we only use two colors to solve Case 3. The output of $i$ is the input of $i+1$ for $i\in\{1,2\}$ and the explanation of the algorithms is listed in order from 1 to 3. 
The configurations transitions of Algorithm~\ref{alg1} are shown in Fig.~\ref{fig:OLA1}, where nodes denote configurations and a directed edge denotes the transition from a configuration to a configuration. 

\begin{figure}[H]
        \centering
        \includegraphics[scale=0.7]{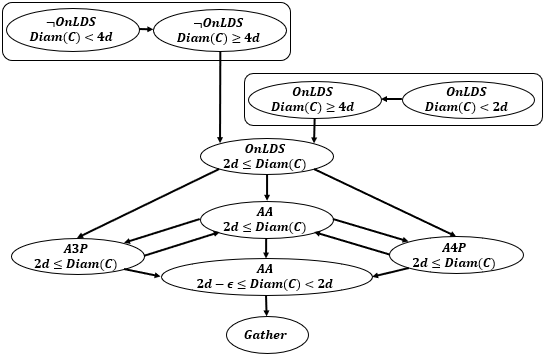}
        \caption{The transition of configurations in Gathering-$\FS$-Robots($r_i$).}
        \label{fig:OLA1}
\end{figure} 

The outline of the behavior of Algorithm \ref{alg1} is explained as follows. From any initial configuration $C_0$ of $\lnot OnLDS$ or $OnLDS$, robots move radially outward by $4d<\delta$ from $CTR(C_0)$ if $Diam(C_0) <4d$, making the configuration $Diam(C_1)\geq4d$, and then robots make the $OnLDS$ configuration $C_2$ such that $Diam(C_2)\geq2d$ (lines 1--2) by using ElectLDS-Preserving-Distance$(r_i)$~\cite{TWK},
making $OnLDS$ preserving the diameter.
When the configuration $C_2$ is obtained, 
robots located not at endpoints move to endpoints and robots located at endpoints stay and then a 2-point configuration $C_3$ such that $2d\leq Diam(C_3)$ results through special patterns $A3P$ or $A4P$ (lines 3--4). 
Then the robots reduce the diameter and make
a 2-point configuration $C_4$ such that $2d-\epsilon \leq Diam(C_4) < 2d$ through special patterns $A3P$ or $A4P$. 
When a 2-point configuration $C_4$ is obtained, 
robots with color $A$ change its color to $B$ and move to the midpoint and robots with $B$ color stay,  Gathering is achieved (lines 5--6).

Algorithm \ref{alg2} is based on ElectLDS-Preserving-Distance$(r_i)$~\cite[Algorithm~8]{TWK} and produces the unique $LDS$ with its length at least 2d unless it produces a Gathering configuration. 

\begin{lemma}\cite{TWK}\label{lem:EPD}
Using ElectLDS-Preserving-Distance, if $C(t)$ is any configuration, 
then there exists a time $t^{\prime}>t$ such that $C(t)\rightarrow^{\ast}C(t^{\prime})$, $C(t')$ is a Gathering configuration or an $OnLDS$  configuration with $Diam(C(t^{\prime})) \geq Diam(C(t))/2$.
\end{lemma}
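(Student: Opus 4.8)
The plan is to reconstruct the correctness argument for the routine \emph{ElectLDS-Preserving-Distance} from \cite{TWK}, on which Algorithm~\ref{alg2} is based. First I would dispose of the easy cases: if $C(t)$ is already $OnLDS$ we take $t'=t$, and if the routine ever reaches a one‑point (Gathering) configuration we are also done; so assume henceforth that some robot lies strictly off the longest segments, or that there are several longest segments. The geometric backbone is the classical fact that all diameter segments of a finite planar point set pairwise intersect, so the family of longest segments is either a single segment or a small family of pairwise‑intersecting segments; together with consistent chirality and the shared unit, this lets every robot compute from its own snapshot the same canonical target segment $\overline{pq}$, and a canonical labelling of its two endpoints whenever the configuration admits no symmetry exchanging them.

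For termination I would use a lexicographic potential, for instance (number of distinct longest segments, number of robots not lying on the elected segment), and argue that it strictly decreases whenever an eligible robot is activated; since the \SSY\ scheduler is fair, eligible robots are eventually activated, so after finitely many rounds we reach an $OnLDS$ (or Gathering) configuration $C(t')$. For distance preservation I would prescribe the moves so that, in the generic non‑symmetric case, the elected endpoints $p$ and $q$ never move while every other robot moves only toward the line $pq$ by an amount too small to leave the current smallest enclosing circle; then $Diam(C(t')) \geq |\overline{pq}| = Diam(C(t)) \geq Diam(C(t))/2$, with the halving in the statement being exactly the slack that absorbs the symmetric cases. In a mirror‑symmetric configuration, either some robot off the elected segment is fixed by the mirror and moves to break the symmetry (without leaving the enclosing circle, hence without touching the diameter), or all robots are swapped in pairs and collapse symmetrically onto the long axis of symmetry, whose length stays within a constant factor — comfortably above $1/2$ — of the current diameter; in a configuration with a nontrivial rotational symmetry no segment can be canonically elected, and the robots move toward the common center, producing a Gathering configuration directly.

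To make the argument robust against \SSY\ I would phrase the two properties above — ``the elected segment is intact'' and ``the smallest enclosing circle has radius at least $Diam(C(t))/2$'' — as configuration invariants preserved under activation of \emph{any} nonempty eligible subset, verifying in particular that a single‑robot activation already preserves both, so that partial, interleaved moves cannot break them; I would likewise check that each prescribed move in the symmetric strata either strictly decreases the lexicographic potential or yields the terminal configuration, ruling out cycling. I expect the main obstacle to be precisely this simultaneous juggling of three requirements: tie‑breaking among several longest segments (or among the two endpoint roles), guaranteed progress under the adversarial \SSY\ scheduler, and the diameter lower bound — keeping the non‑endpoint moves small enough to preserve the enclosing circle yet definite enough to force the potential strictly down, while never letting a symmetric configuration trap the execution, is where essentially all of the case analysis lives, which is presumably why \cite{TWK} isolates it as a dedicated subroutine.
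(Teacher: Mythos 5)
The paper does not prove this statement at all: Lemma~\ref{lem:EPD} is imported verbatim from \cite{TWK}, where it is established for the concrete routine ElectLDS-Preserving-Distance (Algorithm~8 there), so there is no in-paper argument to compare yours against. Judged on its own terms, your proposal is not yet a proof but a design sketch for an algorithm you invent yourself: you never pin down what ElectLDS-Preserving-Distance actually does, and the lemma is a statement about that specific routine, so every property you need (``elected endpoints never move'', ``other robots move only toward the line $pq$ and stay inside the enclosing circle'', the lexicographic potential strictly decreasing under activation of an arbitrary nonempty subset) is assumed rather than derived. In particular, your generic-case argument would give $Diam(C(t'))\geq Diam(C(t))$, and you attribute the factor $1/2$ entirely to the symmetric strata with only the unsupported assertion that the axis of symmetry is ``comfortably above $1/2$'' of the diameter; the factor $1/2$ is exactly the quantitative content of the lemma, so this is the step that cannot be waved away.

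Two further points would break the sketch as written. First, in the rotational-symmetry case you claim the robots ``move toward the common center, producing a Gathering configuration directly''; under \SSY\ with non-rigid movement this is false as stated, since an adversary can activate a proper subset and truncate moves, after which the rotational symmetry (and hence your case distinction) is gone, while the diameter may already have shrunk below the bound you need --- handling exactly this interleaving is the hard part of the argument in \cite{TWK}, not a corner case. Second, your invariant ``the elected segment is intact'' must be recomputable from the configuration alone in every round, because in the relevant models the robots executing this subroutine have no usable memory of which segment was elected; you would need to show that the partial inward moves of non-endpoint robots can never create a new longest segment or destroy the canonical tie-breaking among several pairwise-intersecting longest segments, and this is asserted but not argued. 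As it stands, the proposal identifies the right difficulties (tie-breaking, fairness, diameter preservation) but does not close any of them, so it does not substitute for the cited proof.
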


In the following algorithms, the expression $[p,q] + \alpha - \beta$ denotes the point at distance $\alpha - \beta$ from point $p$ on the line segment $\overline{pq}$.

\begin{algorithm}[H]
    \caption{ElectLDS$(r_i)$}
    \label{alg2}
    \begin{algorithmic}[1]
    \REQUIRE $\lnot OnLDS$ or ($OnLDS$ and $2d > |\overline{pq}_{\mathcal{SS}_i}|$)
    \ENSURE $OnLDS$ and $2d\leq |\overline{pq}_{\mathcal{SS}_i}|$
    \vskip\baselineskip
    \IF{$Diam(\mathcal{SS}_i) < 4d$}
    \STATE $des_{i} \leftarrow [p_i,CTR(\mathcal{SS}_i)] + 4d$ \quad //move $4d$ outward from $CTR(\mathcal{SS}_i)$
    \ELSIF{$Diam(\mathcal{SS}_i) \geq 4d$}
    \STATE ElectLDS-Preserving-Distance$(r_i)$ \quad //\cite[Algorithm~8]{TWK}
    \ENDIF
    \end{algorithmic}
\end{algorithm}

Lemma \ref{lem:1} ensures that an $OnLDS$ configuration $C(t)$ with $|\overline{pq}_{C(t)}|\geq 2d$ is reached.

\begin{lemma}\label{lem:1}
If $|O(LDS(C(t)))| \geq 0$, then there exists a time $t^{\prime}>t$ such that $C(t)\rightarrow^{\ast}C(t^{\prime})$, $C(t')$ is a Gathering configuration or an $OnLDS$ configuration with $|\overline{p'q'}_{C(t^{\prime})}| \geq 2d$.
\end{lemma}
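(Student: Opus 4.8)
The plan is to exploit the two-branch structure of \textsc{ElectLDS} (Algorithm~\ref{alg2}) together with Lemma~\ref{lem:EPD}. Starting from any configuration $C(t)$ to which \textsc{ElectLDS} is applied, I would split on the guard $Diam(\mathcal{SS})<4d$ versus $Diam(\mathcal{SS})\ge 4d$. Since the diameter is coordinate-free and the robots share a unit of length, every robot evaluates this guard the same way, so in each round either all activated robots perform the radial expansion of line~2 or all of them execute \textsc{ElectLDS-Preserving-Distance} (line~4). A first remark to record is that the expansion step has length $4d<\delta$, so under Non-Rigid$(+\delta=)$ every activated robot actually reaches the expanded point: the adversary cannot stop it short during expansion rounds.

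For the case $Diam(C(t))\ge 4d$ I would invoke Lemma~\ref{lem:EPD} directly: \textsc{ElectLDS-Preserving-Distance} drives the configuration to a Gathering configuration or to an $OnLDS$ configuration with $Diam\ge Diam(C(t))/2\ge 2d$, which is precisely the target of Lemma~\ref{lem:1} (note that for an $OnLDS$ configuration $|\overline{pq}_{C}|=Diam(C)$, so the bound $2d$ transfers). The one point needing care is that the system stays in this branch throughout the run of \textsc{ElectLDS-Preserving-Distance}: if the diameter ever dropped below $4d$, the robots would revert to the expansion move and the two phases could interleave. This is ruled out by the distance-preserving property of \textsc{ElectLDS-Preserving-Distance} from~\cite{TWK} (the diameter does not drop below $4d$ once it is at least $4d$), so the hypothesis of Lemma~\ref{lem:EPD} holds along the whole execution.

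For the case $Diam(C(t))<4d$ I would show that finitely many expansion rounds already bring the diameter to at least $4d$, reducing to the previous paragraph. Call an expansion round \emph{effective} if at least one activated robot is not located at the centre $O$ of the current smallest enclosing circle. I claim a single effective round yields $Diam\ge 4d$: let $r_i$ be such a robot, at distance $\rho_i>0$ from $O$ along the unit direction $u_i$; after the round it sits at the point $P$ at distance $\rho_i+4d\ge 4d$ from $O$ along $u_i$. Because $O$ lies in the convex hull of the robots touching the smallest enclosing circle, some robot $r_j\ne r_i$ on that circle satisfies $\langle p_j-O,u_i\rangle\le 0$; a law-of-cosines computation then shows that the distance between $P$ and $r_j$'s position after the round---whether or not $r_j$ itself moved---is at least $4d$ (in both subcases the cross term is non-positive and drops out). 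Fairness guarantees an effective round occurs, unless the configuration first collapses to a single point, i.e. a Gathering configuration: the only way a round fails to be effective is that every activated robot lies at $O$, and such rounds leave the configuration unchanged. Finally, expansion never creates a Gathering configuration prematurely, since moving all robots radially outward from a common centre by a common distance keeps distinct positions distinct.

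The main obstacle is not any individual geometric estimate---these are routine once the law-of-cosines bound above is in place, and the $n=2$ instance is essentially immediate---but the control-flow interaction between the two branches of Algorithm~\ref{alg2}: one must ensure the execution cannot oscillate forever between expansion rounds (which raise $Diam$ above $4d$) and \textsc{ElectLDS-Preserving-Distance} rounds (which could lower it again). This is exactly where the distance-preservation guarantee of~\cite{TWK} is invoked, and I would make that dependency explicit in the write-up. The remaining bookkeeping---termination of the expansion phase under an adversarial semi-synchronous scheduler, and the degenerate "all robots at $O$" situation---is then absorbed into the effectiveness-plus-fairness argument above.
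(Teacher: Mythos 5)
Your proposal is correct and takes essentially the same route as the paper: the identical case split on $Diam<4d$ versus $Diam\ge 4d$, the same radial-expansion argument resting on the fact that the centre of the smallest enclosing circle guarantees some other robot at an angle of at least $90^\circ$ (the paper's phrasing of your convex-hull/law-of-cosines step), and the same appeal to Lemma~\ref{lem:EPD} for the large-diameter case. You are in fact more explicit than the paper about ineffective rounds (all activated robots at the centre, resolved by fairness) and about possible interleaving of the two branches of Algorithm~\ref{alg2}, which the paper glosses over by invoking Lemma~\ref{lem:EPD} directly; note only that your claim that the subroutine never lets the diameter drop below $4d$ is stronger than what Lemma~\ref{lem:EPD} literally guarantees (a final diameter of at least half the initial one), though it does not contradict the paper's argument.
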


\begin{proof}
Whenever the robots are not configured $OnLDS$, they consider $SEC(C(t))$ and $LDS(C(t))$. If $Diam(C(t))$ is below $4d$, then it is increased by letting any activated robot move radially outwards by $4d < \delta$ from $CTR(C(t))$. This increases $Diam(C(t))$ by at least $4d$ in one step for the following reason. For any point on or in the smallest enclosing circle $SEC(C(t))$, there is another point at an angle of at least $90$ degrees when viewed from the $CTR(C(t))$ because all points would lie on one side of some diameter of the $SEC(C(t))$, contradicting its minimality. The outward movements thus cannot cancel each other out; they will add up to an increase of at least $4d$. We now consider the second case where $Diam(C(t))$ is at least $4d$. Then the mentioned algorithm is used to reach a Gathering or a $OnLDS$ configuration $C(t'))$ with $Diam(C(t')) \geq 2d$ according to Lemma~\ref{lem:EPD}.\qed
\end{proof}

Next, we make a 2-point configuration $C$ satisfying $2d-\epsilon \leq |\overline{pq}_{C}| < 2d$ from any $OnLDS$ configuration. This adjustment task is performed using Algorithm~\ref{alg3}. In the algorithm for robot $r_i$, the endpoints of $LDS$ are denoted by $p_{n}$ and $p_f$ for position $p_i$ of $r_i$, where $p_n$ is the nearest endpoint from $p_i$ and $p_f$ is the farthest one from $p_i$. Note that if $p_i$ is one of the endpoints, then $p_i=p_n$ and $p_f$ is the other endpoint. 

In Algorithm~\ref{alg3}, $A3P$ and $A4P$ denote the following predicates (Fig.~\ref{fig:A3PA4P}). When configuration $C(t)$ is 2-point configuration, if robots move on one endpoint only, $C(t+1)$ satisfies $A3P$ and if robots move on both endpoints, $C(t+1)$ satisfies $A4P$.

\begin{description}
    \item[$A3P$ :] $|\mathcal{P}(\mathcal{SS}_i)|=3, \mathcal{P}(\mathcal{SS}_i)=\{p_n,p_{m_1},p_f\}, |\overline{p_np_{m_1}}|\neq|\overline{p_{m_1}p_f}|, |\overline{p_np_{m_1}}|=\frac{\epsilon}{2}, 2d-\epsilon \leq |\overline{p_{m_1}p_f}|$, 
    \item[$A4P$ :] $|\mathcal{P}(\mathcal{SS}_i)|=4, \mathcal{P}(\mathcal{SS}_i)=\{p_n,p_{m_1},p_{m_2},p_f\}, |\overline{p_np_{m_1}}|=|\overline{p_{m_2}p_f}|=\frac{\epsilon}{2}, 2d-2\epsilon \leq |\overline{p_{m_1}p_{m_2}}|$
\end{description}

\begin{figure}[H]
        \centering
        \includegraphics[scale=0.7]{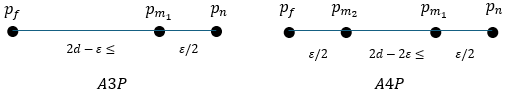}
        \caption{Configurations in $A3P$ and $A4P$.}
        \label{fig:A3PA4P}
\end{figure}


\begin{algorithm}[H]
    \caption{Adjustment-LDS$(r_i)$}
    \label{alg3}
    \begin{algorithmic}[1]
    \REQUIRE $OnLDS$ and $2d \leq |\overline{pq}_{\mathcal{SS}_i}|$
    \ENSURE $|\mathcal{P}(\mathcal{SS}_i)|=2$ and $2d-\epsilon \leq |\overline{pq}_{\mathcal{SS}_i}| < 2d$.
    \vskip\baselineskip
    \IF{$(((|\mathcal{P}(\mathcal{SS}_i)|=2)$ \AND $(2d \leq |\overline{pq}_{\mathcal{SS}_i}|))$ \OR $A3P$ \OR $A4P)$ \AND $(p_{i}=p_{n})$}
    \STATE $des_{i} \leftarrow [p_n,p_f]-\frac{\epsilon}{2}$ \quad //$\frac{\epsilon}{2}$ inward
    \ELSIF{$((|\mathcal{P}(\mathcal{SS}_i)|\geq3)$}
    \STATE $des_{i} \leftarrow p_{n}$
    \ENDIF
    \end{algorithmic}
\end{algorithm}

\begin{lemma}\label{lem:3}
    If $C(t)$ is an $OnLDS$ configuration with $|\overline{pq}_{(C(t))}| \geq 2d$, then by Algorithm~\ref{alg3}, there are times $t'$ and $t''$ ($t''>t'>t$) such that $C(t)\rightarrow^{\ast}C(t^{\prime})$, $C(t')$ satisfies one of the following, and $C(t')\rightarrow^{\ast}C(t^{\prime\prime})$, $C(t'')$ is a 2-point configuration with $2d-\epsilon \leq |\overline{p''q''}_{C(t'')}| < 2d$.
    \begin{enumerate}
        \item $|\mathcal{P}(C(t^{\prime})|=2$ and $|\overline{p^{\prime}q^{\prime}}_{C(t^{\prime})}|=|\overline{pq}_{C(t)}|$
        \item $|\mathcal{P}(C(t^{\prime})|=2$ and $|\overline{p^{\prime}q^{\prime}}_{C(t^{\prime})}|=|\overline{pq}_{C(t)}|-\frac{\epsilon}{2}$
        \item $|\mathcal{P}(C(t^{\prime})|=2$ and $|\overline{p^{\prime}q^{\prime}}_{C(t^{\prime})}|=|\overline{pq}_{C(t)}|-\epsilon$
    \end{enumerate}
\end{lemma}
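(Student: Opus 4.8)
The plan is to follow the two stages already present in the statement. \emph{Stage~I} constructs the intermediate two-point configuration $C(t')$, whose diameter will be one of $L$, $L-\epsilon/2$, $L-\epsilon$, where $L:=|\overline{pq}_{C(t)}|\ge 2d$ (the three listed alternatives); \emph{Stage~II} starts from an arbitrary two-point configuration of diameter at least $2d-\epsilon$ and brings the diameter into $[2d-\epsilon,2d)$. Two observations drive everything. First, in an $OnLDS$ configuration the $LDS$ endpoints $p,q$ are occupied, and Algorithm~\ref{alg3} keeps them fixed except inside the $A3P$ and $A4P$ patterns, so the $LDS$ --- and hence the diameter --- is invariant under the moves of the non-endpoint robots heading to the endpoints. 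Second, every \emph{diameter-changing} move prescribed by Algorithm~\ref{alg3} has length exactly $\epsilon/2<\delta$ (since $\epsilon<d<\delta/4$) and is therefore executed in full even under non-rigid movement; this confines the diameter to the discrete ladder $L,\,L-\epsilon/2,\,L-\epsilon,\dots$, which is what makes Stage~II's final-interval argument work.

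\textbf{Stage~I.} If $C(t)$ already has two points, the first alternative holds with $t'=t$ and Stage~II takes over. Otherwise $C(t)$ has at least three points. As long as at least $5$ points are present, neither $A3P$ nor $A4P$ can hold (they require exactly $3$, resp.\ $4$, points), so by Algorithm~\ref{alg3} every non-endpoint robot moves toward a nearest endpoint while the endpoint robots stay. I would prove termination of this ``flocking'' by a potential argument: $\sum_r\lceil D_r/\delta\rceil$, summed over the interior robots $r$ with $D_r$ the distance of $r$ to its target endpoint, strictly decreases at every activation of an interior robot, because such a robot either reaches its endpoint or advances by at least $\delta$; fairness then empties the interior in finitely many rounds, giving a two-point configuration of diameter $L$ (first alternative). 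Alternatively, $C(t)$ itself may be --- or the scheduler may lead into --- a $3$-point $A3P$ or a $4$-point $A4P$ configuration; here I would check that the side conditions ($|\overline{p_{m_1}p_f}|\ge 2d-\epsilon$, resp.\ $|\overline{p_{m_1}p_{m_2}}|\ge 2d-2\epsilon$, and $|\overline{p_np_{m_1}}|\ne|\overline{p_{m_1}p_f}|$) all follow from $L\ge 2d>\epsilon$, and that the only robots that move are those at the outer point(s), which slide inward by $\epsilon/2$ --- a move shorter than $\delta$, hence completed --- onto $p_{m_1}$, resp.\ onto $p_{m_1}$ and $p_{m_2}$, collapsing the configuration in finitely many rounds to two points of diameter $L-\epsilon/2$, resp.\ $L-\epsilon$. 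Since resolving an $A3P$ or $A4P$ configuration already leaves two points, Stage~I uses at most one such detour and ends in one of the three listed alternatives.

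\textbf{Stage~II.} Let $L'\in\{L,L-\epsilon/2,L-\epsilon\}$ be the diameter of $C(t')$; if $L'<2d$ we already have a configuration as required. Otherwise, in a two-point configuration of diameter $L'\ge 2d$ every activated endpoint robot moves $\epsilon/2$ inward (exactly, since $\epsilon/2<\delta$), and the round produces, depending on which robots the scheduler activates, either another two-point configuration (diameter $L'-\epsilon/2$ if only one endpoint is emptied, $L'-\epsilon$ if both are), or a $3$- or $4$-point configuration that --- again because $L'\ge 2d>\epsilon$ --- is $A3P$ or $A4P$ and therefore collapses, by the Stage~I mechanism and in finitely many further rounds, to a two-point configuration of diameter $L'-\epsilon/2$ or $L'-\epsilon$. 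No smaller positive decrease is possible, and fairness guarantees that an endpoint robot is eventually activated, so each such ``macro-round'' strictly decreases the diameter by $\epsilon/2$ or $\epsilon$. Since the process halts precisely when the diameter first leaves $[2d,\infty)$ and each decrease is at most $\epsilon$, that first value below $2d$ is at least $2d-\epsilon$; hence after finitely many macro-rounds we reach $C(t'')$ with $2d-\epsilon\le|\overline{p''q''}_{C(t'')}|<2d$.

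\textbf{Main obstacle.} The real work is the case analysis forced by the semi-synchronous scheduler: a single logical step --- flock to the endpoints, or shrink the diameter by $\epsilon/2$ --- is spread over several rounds in which the adversary can expose partially-moved, multi-point intermediate configurations, and one must verify that Algorithm~\ref{alg3} recognizes each of them as exactly one of ``$OnLDS$ with at least $3$ points'', $A3P$, or $A4P$, and routes it to the intended successor; in particular the $A3P$/$A4P$ side conditions must stay valid throughout, and the non-outer robots in those patterns must not drift back toward an endpoint. Non-rigidity matters only for the long interior-to-endpoint moves and is absorbed by the $\delta$-progress bound, whereas every diameter-changing move has length $\epsilon/2<\delta$ and is hence exact --- precisely the property that keeps the diameter on the ladder $L,L-\epsilon/2,L-\epsilon,\dots$ and validates Stage~II's interval argument.
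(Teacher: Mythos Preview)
Your two-stage decomposition and the three Stage~I outcomes mirror the paper's proof, and in several places you are more careful than the paper: the potential $\sum_r\lceil D_r/\delta\rceil$ for the flocking phase and the ``ladder'' argument showing the first diameter below $2d$ is at least $2d-\epsilon$ are both additions the paper leaves implicit.

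The genuine gap is in the $A3P$/$A4P$ resolution, precisely the point you flag as the main obstacle. You assert that in $A3P$ ``the only robots that move are those at the outer point(s)'', but under Algorithm~\ref{alg3} this is false on both sides. A robot at the far endpoint does \emph{not} see $A3P$ (for it $p_n$ is the far endpoint and $|\overline{p_np_{m_1}}|\ge 2d-\epsilon\neq\epsilon/2$), so line~1 fails, line~3 fires, and it ``moves'' to $p_n=p_i$, i.e., stays. More damagingly, a robot at $p_{m_1}$ \emph{does} see $A3P$ (its $p_n$ is the close endpoint, $|\overline{p_np_{m_1}}|=\epsilon/2$), fails the test $p_i=p_n$, falls through to line~3, and is sent back to the close endpoint. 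Hence if the semi-synchronous adversary activates all robots at the close endpoint and at $p_{m_1}$ every round, the two groups swap forever and the configuration never collapses to two points; the same happens symmetrically in $A4P$. So the verification you promised --- ``the non-outer robots in those patterns must not drift back toward an endpoint'' --- fails for the algorithm as written. The paper's own proof is equally terse here (it simply asserts ``resulting in a 2-point configuration'' without treating simultaneous activation of $p_{m_1}$ robots), so your outline tracks the paper faithfully; but a self-contained argument would need either an amendment to the algorithm (interior robots in $A3P$/$A4P$ stay) or an additional invariant ruling out the swap.
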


\begin{proof}
    When $|\mathcal{P}(C(t))| \geq 3$, an activated robot $r_i$ not positioned at the endpoints($p_i \neq p_n$), moves toward $p_n$ (line 6).
    \begin{enumerate}
        \item When all robots reach $p_n$, $|\overline{p^{\prime}q^{\prime}}_{C(t^{\prime})}|$ remains unchanged and becomes a 2-point configuration.
        \item Before all robots reach $p_n$, $C(t)$ may become $A3P$. When $C(t)$ is $A3P$, the robot at $p_n$ closest to $p_{m_1}$ moves toward $p_{m_1}$ (line 2), resulting in a 2-point configuration with $|\overline{p^{\prime}q^{\prime}}_{C(t^{\prime})}|$ reduced by $\frac{\epsilon}{2}$.
        \item Before all robots reach $p_n$, $C(t)$ may become $A4P$. When $C(t)$ is $A4P$, the robot at $p_n$ moves toward $p_{m_1}$ (line 2), resulting in a 2-point configuration with $|\overline{p^{\prime}q^{\prime}}_{C(t^{\prime})}|$ reduced by $\epsilon$.
    \end{enumerate}
    After $t'+1$, the activated robot moves $\frac{\epsilon}{2}$ inward (line 2). By iterating the above steps, a 2-point configuration $C(t^{\prime\prime})$ with $2d-\epsilon \leq |\overline{p''q''}_{C(t'')}| < 2d$ is reached.\qed
\end{proof}




By Lemmas~\ref{lem:1}--\ref{lem:3}, it is guaranteed that there is a time $t$ such that a 2-point configuration $C(t)$ with $2d-\epsilon \leq |\overline{pq}_{C(t)}|<2d$ is obtained. 

Lastly, Algorithm~\ref{alg4} solves Gathering in $\FS$ if the initial configurations satisfy $|\mathcal{P(SS}_i)| = 2$ and $2d-\epsilon \leq |\overline{pq}_{\mathcal{SS}_i}|<2d$. Since $d \leq \frac{\delta}{4}$, every movement in Algorithm \ref{alg4} is the same as the rigid one. The following lemma is easily verified for Algorithm \ref{alg4}.

\begin{algorithm}[H]
    \caption{Gather$(r_i)$}
    \label{alg4}
    \begin{algorithmic}[1]
    \REQUIRE $|\mathcal{P}(\mathcal{SS}_i)|=2$ and $2d-\epsilon \leq |\overline{pq}_{\mathcal{SS}_i}| < 2d$, all robots have color $A$.
    \ENSURE Gathering configuration.
    \vskip\baselineskip
    \IF{$(2d-\epsilon \leq |\overline{pq}_{\mathcal{SS}_i}| < 2d)$ \AND $((|\mathcal{P}(\mathcal{SS}_i)|=2)$ \OR $((|\mathcal{P}(\mathcal{SS}_i)|=3)$ \AND $(|\overline{p_{n}p_{m}}|=|\overline{p_{m}p_{f}}|)))$ \AND $(l_{i}=A)$}
    \STATE $l_{i} \leftarrow B$
    \STATE $des_{i} \leftarrow p_{m}$
    \ELSIF{$(d-\frac{\epsilon}{2} \leq |\overline{pq}_{\mathcal{SS}_i}| < d)$ \AND $(|\mathcal{P}(\mathcal{SS}_i)|=2)$ \AND $(l_{i}=A)$}
    \STATE $l_{i} \leftarrow B$
    \STATE $des_{i} \leftarrow p_{f}$
    \ELSIF{$l_i=B$} 
    \STATE $l_i \leftarrow B$
    \STATE $des_{i} \leftarrow p_{i}$ \quad //stay
    \ENDIF
    \end{algorithmic}
\end{algorithm}

\begin{lemma}\label{lem:5}
    If $C(t)$ is a 2-point configuration with $|\overline{pq}_{C(t)}|$ satisfying $2d-\epsilon \leq |\overline{pq}_{C(t)}| < 2d$, 
    there is a time $t^{\prime}>t$ such that $C(t)\rightarrow^{\ast}C(t^{\prime})$ and $C(t')$ is a Gathering configuration.
\end{lemma}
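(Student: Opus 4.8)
The plan is to follow the configurations reachable from $C(t)$ under Algorithm~\ref{alg4} and show that they all collapse to a single point. Write $D := |\overline{pq}_{C(t)}|$ and let $m$ be the midpoint of $\overline{pq}$; from $D \in [2d-\epsilon, 2d)$ we get $m \notin \{p,q\}$ and $|\overline{pm}| = |\overline{mq}| = D/2 \in [d-\epsilon/2, d)$, and since $\epsilon < d$ this value stays strictly below $2d-\epsilon$. Because $d \le \delta/4$, every destination computed in Algorithm~\ref{alg4} is within distance $\delta$ of the acting robot, so all moves are effectively rigid; I use this silently. I also use that at time $t$ all robots have color $A$, since the earlier phases (Algorithms~\ref{alg2}--\ref{alg3}) never change colors.

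First I would prove by induction on the rounds the invariant: at every time $\ge t$, each robot sits on one of the collinear points $p$, $m$, $q$; every robot at $m$ has color $B$; every robot at $p$ or $q$ has color $A$; and whenever both $p$ and $q$ are occupied, $\overline{pq}$ is the unique longest distance segment (the only pairwise distances present being $0$, $D/2$, $D$). The inductive step is a brief case analysis on an activated robot. A robot at $m$ has color $B$, so it runs line~3 and stays. A robot at $p$ or $q$ has color $A$ and sees one of three snapshots: the $2$-point snapshot $\{p,q\}$ with $|\overline{pq}| = D$, which triggers line~1 through its first disjunct; the $3$-point snapshot $\{p,m,q\}$, whose longest segment has length $D$ and whose middle point is its midpoint, which triggers line~1 through its second disjunct; or---once an endpoint has emptied---the $2$-point snapshot whose two points are distance $D/2 \in [d-\epsilon/2,d)$ apart, which fails line~1 but triggers line~2, whose destination $p_f$ is exactly $m$. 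In each case the robot moves to $m$ and sets its light to $B$. So the occupied-point set stays within $\{p,m,q\}$ and the color constraint survives.

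I would then conclude with fairness: a robot at $m$ never leaves $m$, and any robot still at $p$ or $q$ is (by the invariant) an $A$-robot, so the next time the fair scheduler activates it, it moves to $m$ and remains there forever. Hence some round $t' > t$ has all $n$ robots at $m$, i.e., $C(t')$ is a gathering configuration.

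The point I would verify most carefully is that the branch conditions of Algorithm~\ref{alg4} really do fire as claimed in every reachable intermediate configuration: specifically (i) that in $\{p,m,q\}$ the middle point is genuinely the midpoint, so line~1's equidistance clause holds and its destination $p_m$ equals $m$; and (ii) that after an endpoint empties, the halved length $D/2$ falls in line~2's window $[d-\epsilon/2,d)$ yet outside line~1's window $[2d-\epsilon,2d)$---which is exactly where the hypotheses $\epsilon < d$ and $D \in [2d-\epsilon,2d)$ are used. The remainder is routine bookkeeping.
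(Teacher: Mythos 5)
Your proposal is correct and follows essentially the same route as the paper's proof: robots at the endpoints move to the midpoint turning $B$, possibly passing through the intermediate $ABA$-type 3-point configuration, and once one endpoint empties the remaining distance $D/2$ falls in the $[d-\frac{\epsilon}{2},d)$ window, so the remaining $A$-robots also move to the midpoint. You merely phrase it more formally as an invariant plus a fairness argument, which is a fine (slightly more rigorous) write-up of the same case analysis.
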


\begin{proof}
    From such a 2-point configuration, the activated robots move to the midpoint while changing light from $A$ to $B$, and stay there afterwards. So when all the robots at the endpoints are activated at the same time, Gathering is achieved. When some of the robots at the endpoints are activated, $C(t_1)$ reaches a 3-point configuration which has the the color configuration $ABA$. In $C(t_1)$, the robots with color $A$ move the same as before, so when the number of robots at the endpoints decreases and there are no more robots with color $A$ at one endpoint, we have a 2-point configuration $C(t_2)$ with the color configuration $AB(BA)$. Then, since the 2-point configuration satisfies $d-\frac{\epsilon}{2} \leq |\overline{pq}_{C(t_2)}| < d$, the robots with color $A$ move to the other endpoint while changing their light from $A$ to $B$, and there is a time $t'$ such that $C(t^{\prime})$ is a Gathering configuration.\qed
\end{proof}

The prohibited patterns appear in this case. If the initial configuration is a 2-point configuration $C$ such that $C=\{a,b\}$ and $d-\frac{\epsilon}{2} \leq |\overline{ab}| < d$ with the color configuration $AA$, the robots with color $A$ move to the other endpoint while changing their lights from $A$ to $B$, and stay there thereafter, and $C_1$ reaches a 2-point configuration with the color configuration $BB$, and thereafter the algorithm terminates and Gathering is not achieved. If the initial configuration is a 3-point configuration $C$ such that $C=\{a,b,c\}$ and $2d-\epsilon \leq |\overline{ac}| < 2d$ and $|\overline{ab}|=|\overline{bc}|$ with the color configuration $AAA$, the activated robots located at the endpoints move to the midpoint while changing their lights from $A$ to $B$, and stay there thereafter, the activated robots not located at the endpoints move to the endpoints without changing their lights, $C_2$ reaches a 3-point configuration with the color configuration $A{\mycom{A}{B}}A$ or a 2-point configuration with the color configuration $A{\mycom{A}{B}}({\mycom{A}{B}}A)$. When $C_2$ is the 3-point configuration with the color configuration $A{\mycom{A}{B}}A$, the robots with color $A$ move the same as before, so when the number of robots at the endpoints decreases and there are no more robots with color $A$ at one endpoint, $C^{\prime}_2$ reaches the 2-point configuration with the color configuration $A{\mycom{A}{B}}({\mycom{A}{B}}A)$. When $C^{\prime}_2$ is a 2-point configuration with the color configuration $A{\mycom{A}{B}}({\mycom{A}{B}}A)$, for the same reason as in a 2-point configuration with $d-\frac{\epsilon}{2} \leq |\overline{ab}| < d$ with the color configuration $AA$, $C^{\prime\prime}_2$ reaches a 2-point configuration with the color-configuration $BB$, and thereafter the algorithm terminates and Gathering is not achieved.

We obtain the following theorem by Lemma \ref{lem:1}-\ref{lem:5}.

\begin{theorem}\label{thm:two-FSTA-gather}
Let $d < \frac{\delta}{4}$ and $\epsilon < d$ be agreed upon by the robots. Gathering is solvable in $\FS$, Non-Rigid($+\delta=$), and \SSY if robots have 2 colors, set view and agreement on chirality, and the following initial configuration excepted.
\begin{enumerate}
    \item 2-point configuration $C=\{a,b\}$ such that $d-\frac{\epsilon}{2} \leq |\overline{ab}| < d$
    \item 3-point configuration $C=\{a,b,c\}$ such that $2d-\epsilon \leq |\overline{ac}| < 2d$ and $|\overline{ab}|=|\overline{bc}|$
\end{enumerate}
\end{theorem}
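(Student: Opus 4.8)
The plan is to assemble Theorem~\ref{thm:two-FSTA-gather} directly from the three-phase decomposition already developed, treating the phases as black boxes supplied by Lemmas~\ref{lem:1}--\ref{lem:5} and verifying only that the phase boundaries match. First I would observe that Algorithm~\ref{alg1} is a pure case split on the snapshot of $r_i$: the three guards (not-$OnLDS$ or short $LDS$; $OnLDS$ with $|\overline{pq}_{\mathcal{SS}_i}|\ge 2d$; two-point with $2d-\epsilon\le|\overline{pq}_{\mathcal{SS}_i}|<2d$) are mutually exclusive and, crucially, every configuration reachable along the way falls into exactly one of them, so no robot is ever left without an applicable rule. I would then fix an arbitrary initial configuration $C_0$ that avoids the two prohibited patterns and all robots colored $A$, and chase it through the pipeline.

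The main argument is a chain of reachability statements. Starting from $C_0$: since $|O(LDS(C_0))|\ge 0$ trivially, Lemma~\ref{lem:1} gives a time $t_1$ with $C_0\rightarrow^{\ast}C(t_1)$ where $C(t_1)$ is either a Gathering configuration (done) or $OnLDS$ with $|\overline{pq}_{C(t_1)}|\ge 2d$. In the latter case Lemma~\ref{lem:3} applies and yields a time $t_2$ with $C(t_1)\rightarrow^{\ast}C(t_2)$ where $C(t_2)$ is a two-point configuration with $2d-\epsilon\le|\overline{pq}_{C(t_2)}|<2d$. Finally Lemma~\ref{lem:5} applies (its hypothesis is exactly the conclusion of Lemma~\ref{lem:3}) and gives a Gathering configuration. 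Concatenating the three $\rightarrow^{\ast}$ segments via transitivity of $\rightarrow^{\ast}$ proves Gathering from $C_0$. I would also note that the color hypothesis ``all robots have color $A$'' is preserved into the third phase: Phases 1 and 2 (Algorithms~\ref{alg2},~\ref{alg3}) never touch lights, so when Phase 3 begins all lights are still $A$, matching the precondition of Algorithm~\ref{alg4}.

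Two subtleties need explicit attention. First, I must confirm that the special intermediate configurations $A3P$ and $A4P$ arising inside Phase 2, and the $ABA$ / $A{\mycom{A}{B}}A$ configurations arising inside Phase 3, are handled by the guards of Algorithm~\ref{alg1} so that the monolithic algorithm behaves as the phase-wise analysis assumes; this is a routine check against the predicate definitions, since $A3P$ and $A4P$ still satisfy $|\overline{pq}_{\mathcal{SS}_i}|\ge 2d$ and are $OnLDS$, routing them to Phase 2, while the three-point configurations in Phase 3 satisfy the extended Phase-3 guard in Algorithm~\ref{alg4}. Second, I must verify that the two excluded initial patterns are exactly the ones that break the chain: the discussion following Lemma~\ref{lem:5} already shows that a $d$-close two-point configuration colored $AA$ collapses to $BB$ without gathering, and likewise the isoceles near-$2d$ three-point configuration colored $AAA$ collapses to $BB$; since these are reachable as non-initial configurations only with at least one robot colored $B$ but as initial configurations carry all-$A$, the $\FS$ robots genuinely cannot distinguish them, so their exclusion is both necessary and sufficient.

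The hardest part is not the reachability bookkeeping but making sure the phase decomposition is airtight at the seams --- in particular that every configuration produced during Phase~1 that is not yet $OnLDS$-with-diameter-$\ge 2d$ still satisfies the Phase~1 guard (so ElectLDS keeps being invoked until Lemma~\ref{lem:1}'s conclusion holds), and that the transient multi-point configurations in Phases 2 and 3 never accidentally satisfy an earlier guard and get kicked back to a previous phase, which would invalidate the clean ``output of $i$ is input of $i+1$'' structure. Once that well-definedness is nailed down, the theorem follows by stringing together Lemmas~\ref{lem:1}, \ref{lem:3}, and \ref{lem:5} with transitivity of $\rightarrow^{\ast}$, noting fairness guarantees each phase actually makes progress.
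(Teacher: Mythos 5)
Your proposal matches the paper's own argument: the paper proves Theorem~\ref{thm:two-FSTA-gather} exactly by chaining Lemmas~\ref{lem:1}, \ref{lem:3}, and \ref{lem:5} through the three-phase structure of Algorithm~\ref{alg1}, with the prohibited patterns excluded for the reason you give (the all-$A$ initial versions of the final-phase patterns collapse to $BB$ without gathering). Your extra checks on the guard seams ($A3P$/$A4P$ remaining in the Phase~2 guard, lights staying $A$ until Phase~3) are consistent with, and slightly more explicit than, what the paper records.
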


\section{Concluding Remarks}\label{sec:conclusion}

In this paper, we have shown that Gathering is impossible for $\FC$ and $\FS$ robots with 2-color lights in \SSY, even assuming agreement on chirality and the minimum distance $\delta$, and consequently, that the conditions imposed on previously developed Gathering algorithms for $\FC$ and $\FS$ robots are necessary. We have also improved the condition imposed on the Gathering algorithm for $\FS$ robots with 2-color lights.

Interesting open questions are developing unconditional Gathering algorithms for $\FS$ robots with more than two colors in \SSY, and/or Gathering algorithms for $\FC$ or $\FS$ robots in \ASY.

\bibliographystyle{plainurl}
\bibliography{referenceorg}

\end{document}